\newcommand{\cgate}[1]{*+<.6em>{#1} \POS ="i","i"+UR;"i"+UL **\dir{-};"i"+DL **\dir{-};"i"+DR **\dir{-};"i"+UR **\dir{-},"i" \cw}
\theoremstyle{plain}
\newtheorem{theorem}{Theorem}
\newtheorem{lemma}[theorem]{Lemma}
\theoremstyle{definition}
\DeclareMathOperator{\tr}{Tr}
\newcommand{\bb}[1]{\ensuremath{\mathbb{#1}}}
\newcommand{\mc}[1]{\ensuremath{\mathcal{#1}}}
\newcommand{\mf}[1]{\ensuremath{\mathfrak{#1}}}
\def\ft#1{\def\tempa{#1}\futurelet\next\ft@i}
\def\ft@i{\ifx\next\bgroup\expandafter\ft@ii\else\expandafter\ft@end\fi}
\def\ft@ii#1{{\ensuremath{\mf{F}_{\tempa}\!\left(#1\right)}}}
\def\ft@end{{\ensuremath{\mf{F}_{\tempa}}}}
\newcommand{\ket}[1]{\ensuremath{| #1 \rangle}}
\newcommand{\bra}[1]{\ensuremath{\langle #1 |}}
\def\ketbra#1{\def\tempa{#1}\futurelet\next\ketbra@i}
\def\ketbra@i{\ifx\next\bgroup\expandafter\ketbra@ii\else\expandafter\ketbra@end\fi}
\def\ketbra@ii#1{\left| \tempa \middle\rangle\!\middle\langle #1 \right|}
\def\ketbra@end{\left| \tempa \middle\rangle\!\middle\langle \tempa \right|}
\def\braket#1{\def\tempa{#1}\futurelet\next\braket@i}
\def\braket@i{\ifx\next\bgroup\expandafter\braket@ii\else\expandafter\braket@end\fi}
\def\braket@ii#1{\left\langle \tempa \middle| #1 \right\rangle}
\def\braket@end{\left\langle \tempa \middle| \tempa \right\rangle}
\newcommand{\dbra}[1]{\ensuremath{\langle\!\langle #1|}}
\newcommand{\dket}[1]{\ensuremath{|#1\rangle\!\rangle}}
\newcommand{\cbra}[1]{\ensuremath{(#1|}}
\newcommand{\cket}[1]{\ensuremath{|#1)}}
\def\cketbra#1{\def\tempa{#1}\futurelet\next\cketbra@i}
\def\cketbra@i{\ifx\next\bgroup\expandafter\cketbra@ii\else\expandafter\cketbra@end\fi}
\def\cketbra@ii#1{\left| \tempa \middle)\!( #1 \right|}
\def\cketbra@end{\left| \tempa \middle)\!\middle( \tempa \right|}
\def\cbraket#1{\def\tempa{#1}\futurelet\next\cbraket@i}
\def\cbraket@i{\ifx\next\bgroup\expandafter\cbraket@ii\else\expandafter\cbraket@end\fi}
\def\cbraket@ii#1{\left( \tempa \middle| #1 \right)}
\def\cbraket@end{\left( \tempa \middle| \tempa \right)}
\def\dketbra#1{\def\tempa{#1}\futurelet\next\dketbra@i}
\def\dketbra@i{\ifx\next\bgroup\expandafter\dketbra@ii\else\expandafter\dketbra@end\fi}
\def\dketbra@ii#1{| \tempa \rangle\!\rangle\!\langle\!\langle #1 |}
\def\dketbra@end{| \tempa \rangle\!\rangle\!\langle\!\langle \tempa |}
\def\dbraket#1{\def\tempa{#1}\futurelet\next\dbraket@i}
\def\dbraket@i{\ifx\next\bgroup\expandafter\dbraket@ii\else\expandafter\dbraket@end\fi}
\def\dbraket@ii#1{\langle\!\langle \tempa | #1 \rangle\!\rangle}
\def\dbraket@end{\langle\!\langle \tempa | \tempa \rangle\!\rangle}
\def\CJ/{Choi-Jamio{\l}kowski}
\begin{document}

\title{Randomized compiling for subsystem measurements}
\author{Stefanie J. Beale}
\affiliation{Keysight Technologies Canada, Kanata, ON K2K 2W5, Canada}
\affiliation{Institute for Quantum Computing, University of Waterloo, Waterloo, Ontario N2L 3G1, Canada}
\affiliation{Department of Physics and Astronomy, University of Waterloo, Waterloo, Ontario N2L 3G1, Canada}
\author{Joel J. Wallman}
\affiliation{Keysight Technologies Canada, Kanata, ON K2K 2W5, Canada}
\affiliation{Institute for Quantum Computing, University of Waterloo, Waterloo, Ontario N2L 3G1, Canada}
\affiliation{Department of Applied Mathematics, University of Waterloo, Waterloo, Ontario N2L 3G1, Canada}

\date{\today}

\begin{abstract}
Measurements are a vital part of any quantum computation, whether as a final step to retrieve results, as an intermediate step to inform subsequent operations, or as part of the computation itself (as in measurement-based quantum computing).
However, measurements, like any aspect of a quantum system, are highly error-prone and difficult to model.
In this paper, we introduce a new technique based on randomized compiling to transform errors in measurements into a simple form that removes particularly harmful effects and is also easy to analyze.
In particular, we show that our technique reduces generic errors in a computational basis measurement to act like a confusion matrix, i.e. to report the incorrect outcome with some probability, and as a stochastic channel that is independent of the measurement outcome on any unmeasured qudits in the system.
We further explore the impact of errors on indirect measurements and demonstrate that a simple and realistic noise model can cause errors that are harmful and difficult to model.
Applying our technique in conjunction with randomized compiling to an indirect measurement undergoing this noise results in an effective noise which is easy to model and mitigate.
\end{abstract}

\maketitle

\section{Introduction}

Measurements are the sole method of retrieving information from quantum systems; as such, they are integral to any quantum computation.
In circuit-based models of quantum computation, most measurements are destructive, terminal measurements in the computational basis, used at the end of a circuit to retrieve results.
Noise on these measurements is typically modeled by assuming that the measurement outcomes are permuted according to a \textit{confusion matrix}, that is, a matrix containing probabilities of observing a given outcome given that another outcome occurred.
When noise can accurately be modeled by a confusion matrix, it can be mitigated in post-processing with an overhead that, under reasonable assumptions, scales polynomially with the number of qubits~\cite{Bravyi2021}.

However, measurement outcomes can also be used to choose subsequent operations as in quantum error correction~\cite{gottesman2009introduction} and measurement-based quantum computation~\cite{Raussendorf2001}.
For these applications, the assumption that measurements are terminal and destructive is insufficient, and the noise model that applies a confusion matrix is not well motivated.
The effect of realistic noise when the measured state persists after measurement can be harmful and difficult to model.
This same problem of realistic errors being difficult to model also arises for unitary channels.
For such channels, the effects of noise can be reduced to a stochastic model which is easier to predict and mitigate or correct by applying randomized compiling (RC) \cite{Wallman2016a}.

In this paper, we generalize RC to tailor the effects of measurement errors to a stochastic form that is easier to model.
Like RC for unitary channels, our generalization does not introduce significant overhead in number of shots or number of gates.
We show that our protocol reduces fully general noise on a destructive computational basis measurement to a channel which acts stochastically on the unmeasured qudits in a way that is independent of the measurement outcome, and permutes the outcomes observed on the measured qudits.
We further show how harmful noise can occur from a simple and realistic perturbation of an indirect measurement and that our protocol reduces this noise to a more desirable form.

We begin in \cref{sec:preliminaries} by reviewing basic concepts in quantum computing, including laying out our notation for quantum states, operations and measurements, reviewing the effects of twirling on a noisy quantum operation, and defining the concept of unnormalized stochastic channels to cover the form of noise returned by our technique.
In \cref{sec:indMeasEx} we give an explicit example where a simple and well-motivated noise model on a standard implementation of an indirect measurement results in noise that does not follow the standard assumptions on measurement noise, and which is harmful and difficult to model when subsequent operations are applied.
In \cref{sec:rcMeas} we show how randomized compiling can be applied to non-destructive measurements to tailor harmful errors into a form which is easier to model and less harmful.
\Cref{sec:HigherRankMeasurements} generalizes \cref{sec:indMeasEx} to higher rank measurements, showing how randomized compiling can be applied to indirect measurements to tailor noise effectively, and we return to the example from \cref{sec:indMeasEx} to show the effects of randomized compiling explicitly.
We include in an appendix an explicit derivation of a noisy implementation of a general indirect measurement of a Weyl operation which shows that the harmful noise which is removed by randomized compiling is realistic for noise on higher dimensional systems and for a common subclass of measurements.

\section{Preliminaries}\label{sec:preliminaries}

We begin by defining some notation and reviewing basic concepts we use to represent quantum channels acting on $n$ qudits of dimension $d$.
Throughout this paper, we denote the uniform average of a function $f$ over a set $\bb{S}$ by
\begin{align}
    \bb{E}_{s \in \bb{S}} f(s) = \frac{1}{|\bb{S}|} \sum_{s \in \bb{S}} f(s)
\end{align}
and a tensor product of powers of an operator by
\begin{align}\label{eq:VectorPower}
A^a = \otimes_{i\in \bb{Z}_n} A^{a_i}
\end{align}
where $a \in \bb{R}^n$

\subsection{Quantum states}\label{sec:QuantumStates}

A qudit of dimension $d$ is a quantum system whose state can be described by an element of the Hilbert space $\bb{H}_d = \bb{B}(\bb{C}^d)$, that is, the set of bounded linear operators from $\bb{C}^d$ to itself.
To be a valid description, the state must be a density operator, that is, a positive semi-definite operator with unit trace.
It is convenient to represent quantum states as vectors of expansion coefficients.
Relative to a fixed trace-orthogonal basis $\{B_j : j \in \bb{Z}_d^2\}$ of $\bb{H}_d$ and the canonical orthonormal basis $\{e_j : j \in \bb{Z}_{d^2}\}$ of $\bb{C}^{d^2}$, we define the vectorization map to be the function
\begin{align}
    \dket{} : \bb{H}_d \to \bb{C}^{d^2} :: \dket{M} = \sum_{j \in \bb{Z}_{d^2}} \frac{\tr B_j^\dagger M}{\sqrt{\tr B_j^\dagger B_j}} e_j.
\end{align}
Note that we explicitly include the normalization factors rather than defining the map relative to a trace-orthonormal basis.
We can translate expectation values into standard vector inner products by defining $\dbra{\cdot} = \dket{\cdot}^\dagger$ and using the usual shorthand $\dbraket{A}{B} = \dbra{A}\dket{B}$, so that
\begin{align}
    \dbraket{A}{B} = \tr A^\dagger B
\end{align}
for all $A, B \in \bb{H}_d$.
For example, one can readily see that the normalization factors are such that 
\begin{align}
\dket{B_j} = \sqrt{\tr B_j^\dagger B_j} e_j,    
\end{align}
and so the trace-orthogonality relations between the $B_j$ are equivalent to orthogonality relations between the $\dket{B_j}$.
We can extend the above definitions to $n$ qudits by setting $d \to d^n$, where we can construct bases for $\bb{H}_{d^n}$ by taking the $n$-fold tensor product of a basis for $\bb{H}_d$.
We will frequently work with the $n$-qudit computational basis states and so define
\begin{align}
    \cket{} : \bb{Z}_d^n \to \bb{C}^{d^2} :: \cket{j} = \dket{\ketbra{j}}.
\end{align}
Note that because the $\cket$ stores a computational basis state, it can be treated as a classical dit or as a qudit with a restricted state space that mimics a classical dit.
For clarity, note that if we conjugate a state $\rho$ by a Kraus operator $L = K \otimes \ket{k}$, we get the state
\begin{align}
    L \rho L^\dagger = M \rho M^\dagger \otimes \ketbra{k}.
\end{align}
Vectorizing the result then gives
\begin{align}
    \dket{L \rho L^\dagger} = \dket{M \rho M^\dagger} \otimes \cket{k},
\end{align}
so that when we switch between vectorized and unvectorized statements we replace $\cket{k}$ by $\ketbra{k}$.

\subsection{Quantum operations}\label{sec:QuantumOperations}

We now introduce quantum operations, where for clarity we will use capital Roman letters (and $\pi$ for projectors) to denote operators, calligraphic font to denote quantum operations, and $\Theta(\mc{Q})$ to denote the noisy implementation of a quantum operation $\mc{Q}$.
A quantum operation is a map from quantum states to quantum states and so is an element of the set $\bb{H}_{d, e}$ of bounded linear operators from $\bb{H}_d$ to $\bb{H}_e$.
As such, one can write any $\mc{Q} \in \bb{H}_{d, e}$ as
\begin{align}
    \mc{Q} = \sum_j \dketbra{A_j}{B_j}
\end{align}
for some $\{A_j\} \subset \bb{H}_e$ and $\{B_j\} \subset \bb{H}_d$.
When $\mc{Q}$ is a completely positive map, there exist Kraus operators $\{K_k\} \subset \bb{C}^{e \times d}$ such that for any trace-orthogonal basis $\{B_j : j \in \bb{Z}_d^2\}$ of $\bb{H}_d$, we have
\begin{align}
    \mc{Q} = \sum_{j,k} \dketbra{K_k B_j K_k^\dagger}{B_j}.
    \label{eq:KrausSuperop}
\end{align}
As a special case, for any unitary operator $U \in \bb{H}_d$, we define the corresponding channel
\begin{align}\label{eq:UnitaryChannel}
    \mc{U} = \sum_{j,k} \dketbra{U B_j U^\dagger}{B_j}.
\end{align}

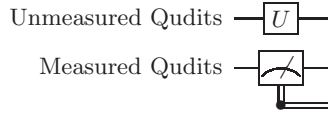
\begin{figure}[t!]
\begin{align*}
\Qcircuit @C=1em @R=.7em {
\lstick{\mbox{Unmeasured Qudits}} & \gate{U} & \qw \\
\lstick{\mbox{Measured Qudits}} & \meter\cwx[1]  & \qw \\
& \control & \cw
}
\end{align*}
\caption{A circuit diagram showing a subsystem measurement, which consists of a computational basis measurement of a subset of $m$ qudits of dimension $d$ and some evolution $U$ on the rest of the system.
The corresponding channel is given in \cref{eq:SubsystemMeasurement}.
}
\label{fig:meas}
\end{figure}

While unitary channels are fundamental to quantum computation, they are never implemented perfectly. 
In full generality, an imperfect implementation of a unitary channel can be a completely arbitrary channel.
However, a common class of errors that are analyzed are stochastic channels~\cite{Graydon2022}, which admit a Kraus operator description $\{K_j\}$ such that $K_0 = p I$ for some $p \in (0, 1]$ and 
\begin{align}
    \tr K_j^\dagger K_k = \delta_{j, k} \tr K_j^\dagger K_j.
\end{align}
Stochastic channels are easier to analyze because the worst-case error rate is approximately equal to the average error rate, and are accurate as a model because general errors can be converted into stochastic channels using randomized compiling~\cite{Wallman2016a}.
Although Ref.~\cite{Graydon2022} defined stochastic channels to be trace-preserving maps, the proof that the twirled channel
\begin{align}
    \bb{E}_{G \in \bb{G}} \mc{G}^\dagger \Lambda \mc{G}
\end{align}
is a stochastic channel for any channel $\Lambda$ and unitary 1-design $\bb{G}$~\cite{Dankert2009} in \cite[Theorem 3]{Graydon2022} can be trivially generalized to completely positive maps, where the twirled channel admits the same Kraus operator decomposition without the trace-preserving condition.
We refer to such channels as unnormalized stochastic channels.

Another special case of interest is a type of measurement that we refer to as \emph{subsystem measurements}, wherein a subset of $m$ qudits are non-destructively measured in the computational basis.
Formally, let $\bb{H}_{d^n} \equiv \bb{H}_{\rm rest} \otimes \bb{H}_{\rm meas}$ where $\bb{H}_{\rm meas} \equiv \bb{H}_{d^m}$ is the space of states for the $m$ qudits being measured.
Then a subsystem measurement with side-evolution $U$ applied to the unmeasured qudits corresponds to measuring the $m$ qudits and returning the outcome $k \in \bb{Z}_d^m$ with probability
\begin{align}
    \tr \pi_k \rho,
\end{align}
where $\pi_k = U \otimes \ketbra{k}$, and we allow $U \in \bb{H}_{d^{n-m}}$ to not be the identity to allow other computation to take place during the measurement.
As the measurement is a projective rank 1 measurement, the post-measurement state conditioned on the outcome $k$ being observed is given by L\"{u}der's rule,
\begin{align}
    \frac{\pi_k \rho \pi_k^\dagger}{\tr \pi_k \rho \pi_k^\dagger}.
\end{align}
We can return the outcome $k$ by ``creating'' a new $m$-qudit system that always has a definite state.
This gives us the Kraus operators $\pi_k \otimes \ket{k}$ and so we obtain the overall channel
\begin{align}\label{eq:SubsystemMeasurement}
    \mc{M}_{U, \bb{H}_{\rm meas}} = \mc{U} \otimes \sum_{k \in \bb{Z}_d^m} \cketbra{k} \otimes \cket{k},
\end{align}
where the second tensor factor corresponds to the returned outcome.
Note that this channel does not violate the no-cloning theorem because it is only cloning orthogonal states.

A subsystem measurement includes the case where $m = n$, that is, where the entire system is measured.
While there are in principle more general types of measurements, in practice more complex measurements are typically implemented by applying an entangling unitary and then performing a subsystem measurement.
We will return to this in more detail in \cref{sec:HigherRankMeasurements}.

Finally, we define uniform stochastic instruments, which correspond to a specific type of imperfect subsystem measurement.
A uniform stochastic instrument is a channel that can be written as
\begin{align}\label{eq:UniformStochasticMeasurement}
    \sum_{a, b, k \in \bb{Z}_d^m} \mc{U} \mc{T}_{a, b} \otimes \cketbra{k + b}{k + a} \otimes \cket{k} 
\end{align}
where $\mc{T}_{a, b}$ is an unnormalized stochastic channel that acts only on $\bb{H}_{\rm rest}$.
A uniform stochastic instrument is a particularly simple error model in that there are only three error modes:
\begin{enumerate}
    \item A state may be misreported, that is, the system is in the state $\cket{k+a}$ but is reported as $\cket{k}$;
    \item The measured systems are left in the wrong computational basis state, that is, they are left in the state $\cket{k+b}$ when the outcome $k$ is reported; and
    \item A stochastic error is applied to the unmeasured systems that is independent of the observed outcome.
\end{enumerate}
We refer to such channels as ``uniform'' because the error applied to the rest of the system is uniform over the observed outcome.
Moreover, the measured and unmeasured qudits are disentangled by such a measurement even if they were entangled before the measurement.
We can see this by noting that the post-measurement state is a linear combination of tensor products of arbitrary states of the unmeasured qudits with pure states of the measured qudits.
The states of the measured and unmeasured qudits after the measurement are typically classically correlated, unless only one $\mc{T}_{a, b}$ is nonzero in \cref{eq:UniformStochasticMeasurement}.
Note that the probability of observing the outcome $k$ and leaving the system in the state $\cket{k+b}$ given an initial state $\rho$ is
\begin{align}
    \sum_{a \in \bb{Z}_d^m} \tr \mc{T}_{a, b} \otimes \cbra{k+a}(\rho),
\end{align}
and so is determined by the unnormalized stochastic channels.
Generally, some outcome will always occur and so
\begin{align}
    \sum_{a, b \in \bb{Z}_d^m} \mc{T}_{a, b}
\end{align}
is a trace-preserving map.

\subsection{Cyclic groups}\label{sec:CyclicGroups}

Many of the quantum operations we use are Weyl operators, which are defined and analyzed using the following properties of cyclic groups.
A more detailed treatment can be found in, e.g., Ref.~\cite{fulton2011representation}.

The cyclic group of order $d$ is the set $\bb{Z}_d $ equipped with addition modulo $d$.
The irreducible representations of $\bb{Z}_d$ are the $d$ functions
\begin{align}\label{eq:CyclicCharacters}
    \chi_a : \bb{Z}_d \to \bb{C} :: \chi_a(b) = \exp(2 \pi i a b / d),
\end{align}
where the value of $d$ is implicit from the context.
We can extend the above representations to obtain the $d^n$ irreducible representations of the Cartesian product $\bb{Z}_d^n$ by
\begin{align}
\chi_a : \bb{Z}_d^n \to \bb{C} :: \chi_a(b) = \prod_{i\in \bb{Z}_n} \chi_{a_i}(b_i).
\end{align}
These representations are all characters and thus obey Schur's orthogonality relations,
\begin{align}\label{eq:CyclicOrthogonality}
    \bb{E}_{a \in \bb{Z}_d^n} \chi_j(a) \bar{\chi}_k(a) = \delta_{j,k}.
\end{align}

\subsection{Weyl operators}

We now introduce Weyl operators, which generalize the familiar multi-qubit Pauli operators to qudits of dimension $d \geq 2$.
Like the Pauli group, the projective Weyl group is a 1-design which forms an orthogonal basis for the operator space.
As with the Pauli operators, we begin by defining the $X$ and $Z$ operators, which for dimension $d$ are
\begin{align}
    X &\equiv \sum_{j\in\bb{Z}_d} \ketbra{j+1}{j} \notag\\
    Z &\equiv \sum_{j\in\bb{Z}_d} \chi_1(j) \ketbra{j},
\end{align}
where $d$ will be implicit from the context.
Noting that
\begin{align}
    X^x &= \sum_{j \in \bb{Z}_d^n} \ketbra{j + x}{j} \\
    Z^z &= \sum_{j \in \bb{Z}_d^n} \chi_z(j) \ketbra{j},\label{eq:ZPower}
\end{align}
we can readily verify the braiding relation
\begin{align}\label{eq:braiding}
    Z^z X^x = \chi_z(x) X^x Z^z,
\end{align}
and, using \cref{eq:CyclicOrthogonality}, the orthogonality relation
\begin{align}\label{eq:WeylOrthogonality}
    \tr [(Z^z X^x)^\dagger (Z^\alpha X^\beta)] = d^n \delta_{x, \beta} \delta_{z, \alpha}. 
\end{align}
We can also use \cref{eq:CyclicOrthogonality} to invert \cref{eq:ZPower} and obtain
\begin{align}\label{eq:ProjectorExpansion}
    \ketbra{j} = \bb{E}_{z \in \bb{Z}_d^n} \bar{\chi}_j(z) Z^z.
\end{align}
The projective Weyl group is the set
\begin{align}
    \bb{PW}_{d,n} \equiv \{X^x Z^z:x,z\in\bb{Z}_d^n\},
\end{align}
which is a trace-orthogonal basis for $\bb{H}_d$.

\section{Example: indirect measurements}\label{sec:indMeasEx}

\begin{figure}[t!]
\begin{align*}
\Qcircuit @C=1em @R=.7em {
\lstick{\mbox{Data qubit: } \rho} & \ctrl{1} & \qw  & \qw \\
\lstick{\mbox{Readout qubit: } \bra{0}} & \targ & \meter & \cw
}
\end{align*}
\caption{Circuit diagram of an indirect measurement. A data qubit in an initial state $\rho$ is coupled to a readout qubit initialized in the state $\bra{0}$ by a CNOT gate. The readout qubit is then measured in the computational basis.}
\label{fig:indirectMeas}
\end{figure}
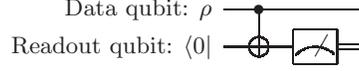

We now illustrate the complexity of modelling errors in subsystem measurements without randomized compiling by considering an \emph{indirect} subsystem measurement, wherein the measurement is performed by coupling the system be measured to another quantum system that can be directly measured.
For simplicity, we will restrict to an indirect computational basis measurement of a single data qubit by coupling it to another readout qubit as illustrated in \cref{fig:indirectMeas}.
Typically, $V$ would correspond to performing a controlled not gate, which can be written as
\begin{align}
    V = \ketbra{0} \otimes I + e^{i \pi / 2} \ketbra{1} \otimes e^{-i \pi X / 2}.
\end{align}
The Kraus operators describing the action of an ideal implementation of \cref{fig:indirectMeas} on the data qubit are
\begin{align}
    K_j &= (I_2 \otimes \ketbra{j}{j})V(I_2 \otimes \ket{0}) = \ketbra{j}{j} \otimes \ket{j},
\end{align}
for $j = 0,1$, which corresponds to an ideal computational basis measurement of the data qubit as desired, where the second tensor factor is the observed measurement outcome.

Now suppose that the controlled-not gate is implemented with an over-rotation error, which we will take to be
\begin{align}\label{eq:Vdef}
    V_\phi = \ketbra{0} \otimes I + \ketbra{1} \otimes e^{-i \phi X},
\end{align}
where $\phi = \pi/2$ corresponds to the ideal case and we ignore the $e^{i \pi / 2}$ factor as it only appears as an overall phase on $K_1$.
This results in the Kraus operators
\begin{align}\label{eq:NoisyKraus}
    \tilde{K}_0 &= \ketbra{0}{0} \otimes \ket{0} + \cos \phi \ketbra{1}{1} \otimes \ket{0} \notag\\
    \tilde{K}_1 &= - i \sin \phi \ketbra{1}{1} \otimes \ket{1}.
\end{align}
While this is an extremely simple error model representing just one likely error in an implementation, it already complicates the behavior.
Consider an initial state $\rho = \ketbra{+}$ for the data qubit.
The unnormalized ideal output after observing the outcome $\ket{0}$ is $\tfrac{1}{2}\ketbra{0}$, where the trace of the unnormalized output gives the probability with which it occurs.
However, with the Kraus operators in \cref{eq:NoisyKraus} we instead obtain
\begin{align}
\tilde{K}_0 \ketbra{+} \tilde{K}_0^\dagger = \tfrac{1}{2}\left(\ketbra{0}{0} + \cos\phi \ketbra{0}{1} + \cos \phi \ketbra{1}{0} + \cos^2\phi \ketbra{1}{1}\right) \otimes \ketbra{0}.
\label{eq:noisyK0Zmeas}
\end{align}
This incorrect state cannot be modelled by a confusion matrix, as it is not a mixture of $\ketbra{0}$ and $\ketbra{1}$.
We will return to this example in \cref{sec:ReturnOfTheIndMeasEx} to show how randomized compiling results in a simpler error model.

\section{Randomized compiling for non-destructive quantum measurements}\label{sec:rcMeas}

In this section, we introduce randomized compiling for non-destructive quantum measurements and analyze the impact on a general noisy measurement.
A general noisy implementation of a subsystem measurement $\mc{M}_{U, \bb{H}_{\rm meas}}$ as illustrated in \cref{fig:meas} can be expressed as
\begin{align}
    \Theta(\mc{M}_{U, \bb{H}_{\rm meas}}) = \sum_{k \in \bb{Z}_d^n} \mc{M}_k \otimes \cket{k},
    \label{eq:generalNoisyMeas}
\end{align}
where the $\mc{M}_k$ are completely positive maps from $\bb{H}$ to itself and the $\cket{k}$ corresponds to the outcome. 
That is, a known classical outcome is returned but the corresponding evolution of the system can be arbitrary provided that the state of the system conditioned on an outcome is a valid quantum state.
As some outcome is always observed, $\sum_k \mc{M}_k$ is a trace-preserving map, however, we do not need to formally impose this restriction.

Our goal is to transform a noisy implementation of a subsystem measurement of the form in \cref{eq:generalNoisyMeas} to be a uniform stochastic measurement as in \cref{eq:UniformStochasticMeasurement} by adding gates drawn uniformly at random from the following three sets as illustrated in \cref{fig:measRC}:
\begin{enumerate}
    \item diagonal gates on the measured qudits before and after the measurement to diagonalize the measurement;
    \item basis permutations on the measured qudits to permute the expected outcome and undo the permutation after the measurement; and
    \item a unitary 1-design on the unmeasured qudits before and after the measurement to reduce the errors on the unmeasured qudits to stochastic errors.
\end{enumerate}

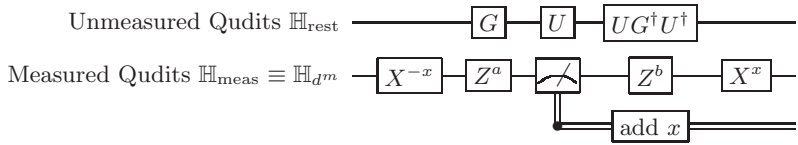
\begin{figure}[t!]
\begin{align*}
\Qcircuit @C=1em @R=.7em {
\lstick{\mbox{Unmeasured Qudits }\bb{H}_{\rm rest}} & \qw & \gate{G} & \gate{U} &  \gate{U G^\dagger U^\dagger} & \qw & \qw \\
\lstick{\mbox{Measured Qudits }\bb{H}_{\rm{meas}}\equiv \bb{H}_{d^m}} & \gate{X^{-x}} & \gate{Z^a} & \meter\cwx[1]  & \gate{Z^b} & \gate{X^x} & \qw  \\
 & & & \control & \cgate{\mbox{add } x}  & \cw & \cw
}
\end{align*}
\caption{A circuit diagram showing a randomly compiled computational implementation of \cref{fig:meas}, where $a, b, x \in \bb{Z}_d^m$ are $m$-digit strings with entries modulo $d$ and $G$ is an element of a unitary 1-design. In \cref{thm:RCMeas}, we prove that the uniform average over $a,b,x,$ and $G$ is a uniform stochastic instrument.}
\label{fig:measRC}
\end{figure}

For clarity, we will first prove the well-known fact that the average effect of applying uniformly random diagonal gates as introduced before and after the measurement produces a dephasing channel before analyzing the effect of all three steps together.
Throughout this paper, we will assume that the implementations of the random gates are ideal.
This assumption can be trivially generalized to errors where the implementation of each random gate $U$ can be written as $\mc{L}\mc{U}\mc{R}$ for some constant maps $\mc{L}$ and $\mc{R}$ by redefining $\Theta(\mc{M}_{U, \bb{H}_{\rm meas}}) \to \mc{R} \Theta(\mc{M}_{U, \bb{H}_{\rm meas}}) \mc{L}$ and similarly for other imperfectly implemented operations~\cite{Wallman2018}, which would correspond to the ``hard'' cycles in randomized compiling~\cite{Wallman2016a}.
To both minimize the number of additional physical gates that are introduced and to justify the above redefinition of the noisy implementations of measurements and other hard cycles, adjacent random gates should be compiled into each other and into neighbouring single-qudit gates from the bare circuit where possible.
As such, when measurements are randomly compiled in addition to other hard cycles following Ref.~\cite{Wallman2016a}, the number of physical gates to be implemented does not change.
Moreover, in the limit where random gates are chosen independently for each execution of a circuit, no additional executions are required~\cite{Granade2014}.

\begin{lemma}\label{lem:Dephasing}
For any positive integers $d$ and $n$ with $d \geq 2$,
\begin{align*}
    \bb{E}_{a \in \bb{Z}_d^n} \mc{Z}^a = \sum_{z \in \bb{Z}_d^n} \dketbra{Z^z} = \sum_{j \in \bb{Z}_d^n} \cketbra{j}.
\end{align*}
\end{lemma}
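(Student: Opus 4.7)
The plan is to diagonalize the channel $\mc{Z}^a$ in a convenient trace-orthogonal operator basis, apply Schur orthogonality \eqref{eq:CyclicOrthogonality} to the average over $a$, and then convert between the two output forms using \eqref{eq:ProjectorExpansion}. I would work in the computational outer-product basis $\{\ketbra{k}{l} : k, l \in \bb{Z}_d^n\}$, which is trace-orthogonal. Since $Z^a$ is diagonal in the computational basis with eigenvalues $\chi_a(k)$ on $\ket{k}$ by \eqref{eq:ZPower}, one gets $Z^a \ketbra{k}{l} (Z^a)^\dagger = \chi_a(k) \bar{\chi}_a(l) \ketbra{k}{l} = \chi_a(k - l) \ketbra{k}{l}$, so $\mc{Z}^a$ is diagonal on the vectorized basis $\{\dket{\ketbra{k}{l}}\}$ with eigenvalue $\chi_a(k - l)$.

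Averaging over $a$ and invoking Schur orthogonality gives $\bb{E}_a \chi_a(k - l) = \delta_{k, l}$, so $\bb{E}_a \mc{Z}^a$ annihilates off-diagonal vectorized basis elements and acts as the identity on the diagonal ones $\cket{k} = \dket{\ketbra{k}}$. The $\cket{k}$ are orthonormal under $\dbraket{\cdot}{\cdot}$ because $\cbraket{k}{l} = \tr \ketbra{k}\ketbra{l} = \delta_{k,l}$, so the resulting projection is exactly $\sum_k \cketbra{k}$, which is the rightmost expression in the lemma.

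For the middle expression, I would invert \eqref{eq:ProjectorExpansion} at the level of the vectorization map to write $\cket{k} = \bb{E}_z \bar{\chi}_k(z) \dket{Z^z}$, substitute into $\sum_k \cketbra{k}$, and apply Schur orthogonality a second time to the inner character sum $\sum_k \bar{\chi}_k(z) \chi_k(w) = \sum_k \chi_k(w - z)$, which collapses to $d^n \delta_{z, w}$. This reduces the double sum over $(z, w)$ to a single sum over $z$, matching $\sum_z \dketbra{Z^z}$ up to the overall normalization fixed by the vectorization convention of \cref{sec:QuantumStates}. The whole argument is thus a pair of character-sum collapses bracketing the action of $\mc{Z}^a$ on outer products; the only non-mechanical step is keeping the normalization factors of the Weyl basis ($\dbraket{Z^z}{Z^z} = d^n$) and the projector basis ($\cbraket{k}{k} = 1$) consistent between the two displayed sums, which is where I would expect to need to be careful.
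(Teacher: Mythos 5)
Your proof is correct and follows essentially the same route as the paper's: both amount to two applications of Schur orthogonality, one to collapse the average over $a$ and one to convert between the Weyl and computational-projector forms via \cref{eq:ProjectorExpansion}. The only difference is the order of operations---you diagonalize $\mc{Z}^a$ on the trace-orthonormal basis $\{\ketbra{k}{l}\}$ and obtain $\sum_{j}\cketbra{j}$ first, whereas the paper uses the braiding relation \cref{eq:braiding} to diagonalize on the Weyl basis and obtains $\sum_{z}\dketbra{Z^z}$ first---and your closing caution about tracking the relative normalizations $\dbraket{Z^z}{Z^z}=d^n$ versus $\cbraket{k}{k}=1$ is well placed, since that factor is exactly what the stated vectorization conventions leave implicit in equating the two displayed sums.
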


\begin{proof}
Using the shorthand in \cref{eq:VectorPower} and \cref{eq:UnitaryChannel} with $\bb{PW}_{d, n}$ as the basis, we have
\begin{align}\label{eq:diagonalChannel}
    \mc{Z}^a &= \sum_{x, z \in \bb{Z}_d^m} \dketbra{Z^a X^x Z^z Z^{-a}}{X^x Z^z} \notag\\
&= \sum_{x, z \in \bb{Z}_d^m} \chi_a(x)\dketbra{X^x Z^z},
\end{align}
where the second equality follows from \cref{eq:braiding}.
Averaging over $a$ and using \cref{eq:CyclicOrthogonality} gives
\begin{align}\label{eq:dephasing}
    \bb{E}_{a \in \bb{Z}_d^n} \mc{Z}^a &= \sum_{x, z \in \bb{Z}_d^n} \dketbra{X^x Z^z}{X^x Z^z} \bb{E}_{a \in \bb{Z}_d^m} \chi_a(x)\notag\\
    &=\sum_{x, z \in \bb{Z}_d^n} \dketbra{X^x Z^z}{X^x Z^z} \bb{E}_{a \in \bb{Z}_d^n} \chi_x(a)\bar{\chi}_0(a)\notag\\
    &= \sum_{z \in \bb{Z}_d^n} \dketbra{Z^z},
\end{align}
establishing the first equality.

Substituting \cref{eq:ProjectorExpansion} into \cref{eq:dephasing} and using linearity gives
\begin{align}\label{eq:dephasing2}
    \bb{E}_{a \in \bb{Z}_d^n} \mc{Z}^a &= \sum_{j, k \in \bb{Z}_d^m} \cketbra{j}{k} \bb{E}_{a \in \bb{Z}_d^m} \chi_j(a) \bar{\chi}_k(a) \notag\\
    &= \sum_{j \in \bb{Z}_d^m} \cketbra{j},
\end{align}
establishing the second equality.
\end{proof}

Having analyzed the effect of the random diagonal gates, we now prove that averaging over all the random gates in \cref{fig:measRC} produces a uniform stochastic instrument.

\begin{theorem}\label{thm:RCMeas}
If the implementation of a noisy measurement as in \cref{fig:meas} is described by the channel \cref{eq:generalNoisyMeas} and the additional gates in \cref{fig:measRC} are ideal, then the average of \cref{fig:measRC} over $a, b, x \in \bb{Z}_d^m$ and $G$ over a unitary 1-design $\bb{G}$ is a uniform stochastic instrument.
\end{theorem}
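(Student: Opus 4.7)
The strategy is to process the three families of random operations one at a time. Writing the noisy measurement as $\Theta(\mc{M}_{U, \bb{H}_{\rm meas}}) = \sum_k \mc{M}_k \otimes \cket{k}$ and reindexing $k' = k + x$ to absorb the classical ``add $x$'' step into the sum over outcomes, the circuit of \cref{fig:measRC} realizes the channel
\begin{align*}
\Lambda_{G,a,b,x} = \sum_{k'} \bigl[(\mc{U}\mc{G}^\dagger\mc{U}^\dagger \otimes \mc{X}^x \mc{Z}^b)\,\mc{M}_{k'-x}\,(\mc{G} \otimes \mc{Z}^a \mc{X}^{-x})\bigr] \otimes \cket{k'}.
\end{align*}
The random choices $a,b,x \in \bb{Z}_d^m$ and $G \in \bb{G}$ are independent; moreover the $a,b,x$ averages act only on the measured and classical registers and the $G$ average acts only on the unmeasured register, so the averages may be performed in any order.

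I first average over $a$ and $b$. By \cref{lem:Dephasing}, $\bb{E}_a \mc{Z}^a = \bb{E}_b \mc{Z}^b = \Pi := \sum_j \cketbra{j}$, so these averages sandwich $\mc{M}_{k'-x}$ with dephasing maps on the measured register. Choosing any Kraus representation $\mc{M}_k(\rho) = \sum_r K_{k,r}\rho K_{k,r}^\dagger$ and defining CP maps $\mc{N}_{i,j,k}$ on $\bb{H}_{\rm rest}$ with Kraus operators $L^{i,j}_{k,r} := (I \otimes \bra{j})K_{k,r}(I \otimes \ket{i})$, a direct calculation yields $(I \otimes \Pi)\mc{M}_k(I \otimes \Pi) = \sum_{i,j} \mc{N}_{i,j,k} \otimes \cketbra{j}{i}$. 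Conjugating by the remaining $\mc{X}^x$ factors on the measured side shifts the indices, since $\mc{X}^x \cketbra{j}{i} \mc{X}^{-x} = \cketbra{j+x}{i+x}$.

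Next I would average over $x$. The measured-side factor $\cketbra{j+x}{i+x}$ is coupled to the classical label $\cket{k'}$ via the index $k'-x$ inside $\mc{N}_{i,j,k'-x}$, and the goal is to decouple them. Substituting $c = k'-x$ (a bijection on $\bb{Z}_d^m$ for each fixed $k'$) converts the $x$-average into an average over $c$ with $j+x = j+k'-c$ and $i+x = i+k'-c$; then substituting $a = i-c$, $b = j-c$ (bijections on $\bb{Z}_d^m$ for each fixed $c$) collects the measured and classical registers into the clean factor $\sum_{k'} \cketbra{k'+b}{k'+a} \otimes \cket{k'}$, with the action on $\bb{H}_{\rm rest}$ becoming the single CP map $\tilde{\mc{N}}_{a,b} := \bb{E}_c \mc{N}_{a+c,\,b+c,\,c}$. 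Carrying out this reindexing correctly is the main bookkeeping step; everything preceding it is an immediate application of \cref{lem:Dephasing}, and everything that follows is immediate.

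Finally I would average over $G$. Factoring $\mc{U}$ to the left gives $(\mc{U}\mc{G}^\dagger\mc{U}^\dagger)\,\tilde{\mc{N}}_{a,b}\,\mc{G} = \mc{U}\bigl(\mc{G}^\dagger(\mc{U}^\dagger \tilde{\mc{N}}_{a,b})\mc{G}\bigr)$, and the generalization of \cite[Theorem 3]{Graydon2022} to CP maps noted in \cref{sec:QuantumOperations} then guarantees that $\mc{T}_{a,b} := \bb{E}_G \mc{G}^\dagger(\mc{U}^\dagger \tilde{\mc{N}}_{a,b})\mc{G}$ is an unnormalized stochastic channel on $\bb{H}_{\rm rest}$. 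The fully averaged circuit therefore equals $\sum_{a,b,k} \mc{U}\mc{T}_{a,b} \otimes \cketbra{k+b}{k+a} \otimes \cket{k}$, which is exactly the uniform stochastic instrument form of \cref{eq:UniformStochasticMeasurement}.
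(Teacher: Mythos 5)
Your proposal is correct and follows essentially the same route as the paper's proof: average the diagonal gates first via \cref{lem:Dephasing}, then use the $X^{\pm x}$ conjugation and the outcome relabeling $k'=k+x$ to decouple the classical label from the residual CP maps (your substitutions $c=k'-x$, $a=i-c$, $b=j-c$ are the paper's relabeling $\alpha\to\alpha+k-x$, $\beta\to\beta+k-x$ together with its observation of $k$-independence), and finally invoke the CP generalization of the twirling result of Ref.~\cite{Graydon2022} for the 1-design average. The only cosmetic difference is that you make the intermediate maps explicit via Kraus operators $L^{i,j}_{k,r}$ where the paper stays at the superoperator level with $\cbra{\cdot}$ and $\cket{\cdot}$.
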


\begin{proof}
Under the stated assumptions, we can write the channel implemented by \cref{fig:measRC} for fixed $a, b, x \in \bb{Z}_d^m$ and $G \in \bb{G}$ as
\begin{align}\label{eq:fullChannel}
    \Theta_{a, b, x, G}(\mc{M}_{U, \bb{H}_{\rm meas}}) 
    &= \sum_{k \in \bb{Z}_d^m} \Lambda_{k, a, b, x, g} \otimes \cket{k}
\end{align}
where we have relabeled the classical outcomes to account for adding $x$ so that the measurement acts as $\mc{M}_{k-x}\otimes\cket{k}$ and we define
\begin{align}\label{eq:Lambda}
 \Lambda_{k, a, b, x, G}
 &= (\mc{U}\mc{G}^\dagger\mc{U}^\dagger \otimes \mc{X}^x \mc{Z}^b) \mc{M}_{k - x} (\mc{G} \otimes \mc{Z}^a\mc{X}^{-x} ).
\end{align}
The average over $a, b, x \in \bb{Z}_d^m$ and $G\in\bb{G}$ is then
\begin{align}
    \Theta_{\rm RC}(\mc{M}_{U, \bb{H}_{\rm meas}}) = \bb{E}_{a,b,x \in \bb{Z}_d^m, G \in \bb{G}} \Theta_{a, b, x, G} (\mc{M}_{U, \bb{H}_{\rm meas}}) .
\end{align}
We will evaluate this average by first averaging over the diagonal gates, that is, over $a,b \in \bb{Z}_d^m$, then over the permutation gates, that is, over $x \in \bb{Z}_d^m$, and finally over the unitary 1-design, that is, over $G \in \bb{G}$.
By \cref{lem:Dephasing}, the independent uniform average over $a, b \in \bb{Z}_d^m$ gives
\begin{align}
    \bb{E}_{a, b} \Lambda_{k, a, b, x, G} = \sum_{\alpha, \beta \in \bb{Z}_d^m} (\mc{U}\mc{G}^\dagger\mc{U}^\dagger \otimes \mc{X}^x \cketbra{\beta}) \mc{M}_{k - x} (\mc{G} \otimes \cketbra{\alpha}\mc{X}^{-x} ).
\end{align}
Now note that the $\mc{X}$ channels permute the ideal projectors, since
\begin{align}
X^x \ketbra{j} X^{-x} 
&= \ketbra{j + x}.
\end{align}
Therefore for any fixed $x$ we have $\mc{X}^x \cket{j} = \cket{j + x}$ and so averaging $\Lambda_{k, a, b, x, G}$ over $a, b, x \in \bb{Z}_d^m$ for fixed values of $k$ and $G$ gives
\begin{align}\label{eq:postDephasing}
    \bb{E}_{a, b, x \in \bb{Z}_d^m} \Lambda_{k, a, b, x, G} 
    &= \sum_{\alpha, \beta \in \bb{Z}_d^m} \bb{E}_{x \in \bb{Z}_d^m} (\mc{U}\mc{G}^\dagger\mc{U}^\dagger \otimes \cketbra{\beta + x}{\beta}) \mc{M}_{k - x} (\mc{G} \otimes \cketbra{\alpha}{\alpha + x}) \notag\\ 
    &= \sum_{\alpha, \beta \in \bb{Z}_d^m} \mc{U}\mc{G}^\dagger \mc{M}_{k, \alpha, \beta} \mc{G} \otimes \cketbra{k + \beta}{k + \alpha},
\end{align}
where we relabel the terms in the sum by setting $\alpha\rightarrow \alpha+k-x$ and $\beta\rightarrow\beta+k-x$ and define
\begin{align}\label{eq:Mkab}
    \mc{M}_{k,\alpha,\beta} = \bb{E}_{x \in \bb{Z}_d^m} \left(\mc{U}^\dagger \otimes \cbra{k + \beta - x}\right) \mc{M}_{k-x} \left(\mc{I}_{\rm rest} \otimes \cket{k + \alpha - x}\right),
\end{align}
which only acts on $\bb{H}_{\rm rest}$.
Therefore to complete the proof we need to show that $\bb{E}_G\mc{G}^\dagger\mc{M}_{k,a,b}\mc{G}$ is a stochastic channel that is independent of $k$.
To see that it is independent of $k$, note that if we relabel $x \to k + x$ in \cref{eq:Mkab}, then the right-hand side is manifestly independent of $k$.
As $\mc{M}_{k, \alpha, \beta}$ only acts on $\bb{H}_{\rm rest}$, we set
\begin{align}
    \mc{T}_{\alpha, \beta} = \bb{E}_{G \in \bb{G}} \mc{G}^\dagger \mc{M}_{k, \alpha, \beta} \mc{G} = \bb{E}_{G \in \bb{G}} \mc{G}^\dagger \mc{M}_{0, \alpha, \beta} \mc{G},
\end{align}
which is an unnormalized stochastic channel~\cite{Graydon2022}, completing the proof.
\end{proof}

\section{Effective higher rank measurements}\label{sec:HigherRankMeasurements}

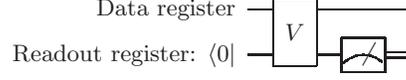
\begin{figure}[t!]
\begin{align*}
\Qcircuit @C=1em @R=.7em {
\lstick{\mbox{Data register}} &  \multigate{1}{V } & \qw & \qw \\
\lstick{\mbox{Readout register: } \bra{0}} &  \ghost{V} & \meter & \cw
}
\end{align*}
\caption{Circuit diagram of a general indirect measurement, which can be used to perform high rank measurements such as syndrome measurements.}
\label{fig:HigherRankMeasurements}
\end{figure}

We now analyze how randomly compiling subsystem measurements can be used to perform effective higher rank measurements with simple error models.
Higher rank measurements are one of the fundamental techniques in quantum error correction as they are used to detect if the system is in some subspace called the code space without disturbing the state of the system if it is in the code space and to determine the correction to apply if it is not in the code space.
Such measurements are performed by a circuit as in \cref{fig:HigherRankMeasurements}, which generalizes \cref{fig:indirectMeas}.
The corresponding Kraus operators are
\begin{align}
    K_k = (I \otimes \bra{k}) V (I \otimes \ket{0}) \otimes \ket{k},
\end{align}
where the $\ket{k}$ represents the observed outcome.
Typically, we choose
\begin{align}\label{eq:ConditionalProjectors}
    V = \sum_{j, k \in \bb{Z}_d^m} \pi_k \otimes \ketbra{k + j}{j}
\end{align}
for a projective measurement $\{\pi_k : k \in \bb{Z}_d^m\}$.
To see that $V$ is unitary, note that for any projective measurement we have $\sum_k \pi_k = I$ and so
\begin{align}
    V^\dagger V = \sum_{j, k \in \bb{Z}_d^m} \pi_k \otimes \ketbra{j} = I \otimes \sum_{j \in \bb{Z}_d^m} \ketbra{j} = I,
\end{align}
and similarly for $V V^\dagger$.
Thus the ideal Kraus operators are
\begin{align}
    K_k = \pi_k \otimes \ket{k},
\end{align}
which is equivalent to performing the ideal projective measurement on the state register and encoding the result in the state of the readout register.
As in \cref{sec:indMeasEx}, simple error models will result in Kraus operators that are not proportional to the ideal projectors and thus leave coherences, arising from cross-terms between the orthogonal subspaces corresponding to different ideal outcomes.
For example, in \cref{sec:appendix-noisy-ind-meas}, we show that an indirect measurement projecting onto the eigenspaces of a Weyl operator with a simple error leaves such coherences. These coherences are assumed to not exist in many analyses of quantum error correction and can cause particularly harmful accumulation of errors as different cospaces and combinations thereof are allowed to interfere in subsequent operations.

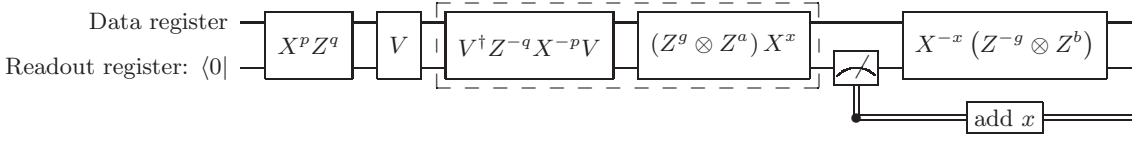
\begin{figure}
\begin{align*}
\Qcircuit @C=1em @R=.7em {
\lstick{\mbox{Data register}} & \multigate{1}{X^p Z^q} & \multigate{1}{V} & \multigate{1}{V^\dagger Z^{-q} X^{-p} V} &\multigate{1}{\left(Z^g \otimes Z^a\right)X^x} \qw & \qw & \multigate{1}{X^{-x}\left(Z^{-g} \otimes Z^b\right)} & \qw \\
\lstick{\mbox{Readout register: } \bra{0}} & \ghost{X^p Z^q} &  \ghost{V} & \ghost{V^\dagger Z^{-q} X^{-p} V} & \ghost{\left(Z^g \otimes Z^a\right)X^x} & \meter\cwx[1] & \ghost{X^{-x}\left(Z^{-g} \otimes Z^b\right)} & \qw \\
& & & & & \control & \cgate{\textrm{add } x} & \cw \gategroup{1}{4}{2}{5}{.7em}{--}
}
\end{align*}
\caption{Randomly compiled implementation of a stabilizer measurement where $V$ is a multi-qudit Clifford operation.
As $V$ is a multi-qudit Clifford operation, $V^\dagger Z^{-q} X^{-p} V$ is a tensor product of single-qudit Weyl operators.
The random Weyl operators are specified by $p, q, x \in \bb{Z}_d^n$, $g \in \bb{Z}_d^{n - m}$, and $a, b \in \bb{Z}_d^m$.}
\label{fig:HigherRankMeasurementsRC}
\end{figure}

While \cref{fig:HigherRankMeasurements} is valid for any unitary $V$, we typically require some structure to ensure that the random gates that are added are simpler than the gate being twirled, and thus do not significantly add to the complexity (and typically the total error) of the circuit.
We now consider the special case where $V$ is a Clifford unitary, that is, when $V W V^\dagger$ is proportional to a Weyl operator for any $W \in \bb{PW}_{d, n}$.
In \cref{fig:HigherRankMeasurementsRC}, we show how to implement \cref{fig:HigherRankMeasurements} with randomized compiling so that Markovian implementations of a Clifford unitary operation and the measurement are reduced to the ideal Clifford channel followed by a uniform stochastic measurement.
Considering first the Clifford operation $V$, from Ref.~\cite{Wallman2016a}, averaging $p, q \in \bb{Z}_d^n$ gives the effective channel
\begin{align}
    \Theta_{\rm RC}(\mc{V}) = \Lambda \mc{V},
\end{align}
where $\Lambda$ is a stochastic Weyl channel, that is, a channel of the form~\cite{fukuda2006}
\begin{align}\label{eq:LambdaWeyl}
    \Lambda &= \sum_{x, z \in \bb{Z}_d^n} \mu(x, z) \mc{X}^x \mc{Z}^z 
\end{align}
where $\mu : \bb{Z}_d^{2n} \to [0, 1]$ is a probability distribution.
Combining this with \cref{thm:RCMeas}, when we average over all the random gates in \cref{fig:HigherRankMeasurementsRC}, we obtain the channel
\begin{align}\label{eq:IndirectRC}
    \Theta_{\rm RC}(\mc{M}_{U, \bb{H}_{\rm meas}} \mc{V}) &=\Theta_{\rm RC}(\mc{M}_{U, \bb{H}_{\rm meas}})\Theta_{\rm RC}(\mc{V})\notag\\
    &=\left(\sum_{a, b, k \in \bb{Z}_d^m} \mc{T}_{a, b} \otimes \cketbra{k + b}{k + a} \otimes \cket{k}\right)\Lambda \mc{V}
\end{align}
where the $\mc{T}_{a, b}$ are unnormalized stochastic channels.
Moreover, as the $\mc{T}_{a, b}$ are twirled by the Weyl group, they are also of the form
\begin{align}\label{eq:TabWeyl}
    \mc{T}_{a, b} = \sum_{\alpha, \beta \in \bb{Z}_d^{n - m}} \nu_{a, b}(\alpha, \beta) \mc{X}^\alpha \mc{Z}^\beta
\end{align}
where each $\nu_{a, b} : \bb{Z}_d^{n-m} \to [0, 1]$ is a subnormalized probability distribution that together add to a normalized probability distribution.
Substituting \cref{eq:LambdaWeyl,eq:TabWeyl} into \cref{eq:IndirectRC} and using the fact that the product of Weyl channels is a Weyl channel gives
\begin{align}\label{eq:IndirectMeasurementGeneralClifford}
    \Theta_{\rm RC}(\mc{M}_{U, \bb{H}_{\rm meas}} \mc{V}) &= \sum_{a, b, k \in \bb{Z}_d^m} \left(\Lambda_{a, b} \otimes \cketbra{k + b}{k + a} \right) \mc{V} \otimes \cket{k},
\end{align}
for some subnormalized stochastic channels $\Lambda_{a, b}$.
When $V$ is a Clifford operation that is also of the special form in \cref{eq:ConditionalProjectors}, \cref{eq:IndirectMeasurementGeneralClifford} simplifies to performing the ideal projective measurement and then applying a uniform stochastic channel.
While we focused on measurements in the computational basis throughout this paper, these results can be generalized to measurements in another basis by re-defining the random $Z$ and $X$ gates relative to the measurement basis.

\subsection{Two-qubit example}\label{sec:ReturnOfTheIndMeasEx}

For concreteness, we return to the simple two-qubit example in \cref{sec:indMeasEx} and show how it is transformed by randomized compiling.
Using the fact that $Z$ is diagonal and use the identities $CX (X \otimes I) CZ = X \otimes X$ and $CX (I \otimes Z) CX = Z \otimes Z$ to determine the correction gates, so that the randomly compiled implementation of \cref{sec:indMeasEx} is as follows, where the $x_j$ and $z_j$ are bits.
\begin{align}\label{eq:fullRCIndMeasEx}
\Qcircuit @C=1em @R=.7em {
\lstick{\mbox{Data qubit: } \rho} & \gate{X^{x_0} Z^{z_0}} & \ctrl{1} & \gate{X^{x_2 - x_0} Z^{z_2 - z_0 - z_1}} & \qw & \gate{X^{-x_2} Z^{-z_2}} & \qw \\
\lstick{\mbox{Readout qubit: } \bra{0}} & \gate{X^{x_1} Z^{z_1}} & \targ & \gate{X^{x_3 - x_1 - x_0} Z^{z_3 - z_1}} & \meter\cwx[1] & \gate{X^{-x_3} Z^{z_4}} & \qw \\
 & &  & & \control & \cgate{\textrm{add } x_3} & \cw
}
\end{align}
As we will only consider a noisy implementation of the controlled NOT gate, we will simplify the above by compiling adjacent single-qubit operations, however, this compilation would result in a different noise process if the single-qubit operations are implemented imperfectly.
We simplify \cref{eq:fullRCIndMeasEx} by applying the first $Z^{z_1}$ to the input state and canceling the random gates where possible by commuting them through single-qubit circuit elements, noting that $\mc{X}$ and $\mc{Z}$ commute as channels.
Because the computational basis measurement is ideal, the random $X$ and $Z$ gates applied before and after it leave it unchanged. These gates should not generally be omitted when the measurement itself is noisy, but in our example the noise is coming solely from a multi-qubit gate, so they have no effect.
These simplifications give the following circuit.
\begin{align}\label{eq:simplifiedRCIndMeasEx}
\Qcircuit @C=1em @R=.7em {
\lstick{\mbox{Data qubit: } \rho} & \gate{X^{x_0} Z^{z_0}} & \ctrl{1} & \gate{X^{- x_0} Z^{- z_0 - z_1}} & \qw & \qw& \qw \\
\lstick{\mbox{Readout qubit: } \bra{0}} & \gate{X^{x_1}} & \targ & \gate{X^{- x_1 - x_0}} & \meter\cwx[1] & \qw & \qw \\
 & &  & & \control & \cw & \cw
}
\end{align}
While \cref{eq:simplifiedRCIndMeasEx} is equivalent to simply randomly compiling the controlled NOT, randomized compiling has not been analyzed in the context of non-destructive measurements. \Cref{thm:RCMeas} establishes that the complete indirect measurement with general errors will be reduced to a uniform stochastic measurement as in \cref{eq:IndirectMeasurementGeneralClifford}, however, the only simple error models for measurements are confusion matrices, which do not fully capture the dynamics of a non-destructive measurement.

We therefore restrict attention to the well-motivated example of an over-rotation gate during an indirect measurement and show that RC helps for this explicit example.
When the controlled not in \cref{eq:simplifiedRCIndMeasEx} is implemented imperfectly using the $V_\phi$ gate defined in \cref{eq:Vdef}, the overall channel is equivalent to the following.
\begin{align}
\Qcircuit @C=1em @R=.7em {
\lstick{\mbox{Data qubit: } \rho} & \gate{X^{x_0} Z^{z_0}} & \multigate{1}{V_\phi} & \gate{X^{- x_0} Z^{- z_0 - z_1}} & \qw & \qw& \qw \\
\lstick{\mbox{Readout qubit: } \bra{0}} & \gate{X^{x_1}} & \ghost{V_\phi} & \gate{X^{- x_1 - x_0}} & \meter\cwx[1] & \qw & \qw \\
 & &  & & \control & \cw & \cw
}
\end{align}
As $V_\phi$ commutes with $Z \otimes I$ and $I \otimes X$, we obtain the reduced circuit
\begin{align}
\Qcircuit @C=1em @R=.7em {
\lstick{\mbox{Data qubit: } \rho} & \gate{X^{x_0}} & \multigate{1}{V_\phi} & \gate{X^{- x_0}} & \qw & \gate{Z^{z_1}}& \qw \\
\lstick{\mbox{Readout qubit: } \bra{0}} & \qw & \ghost{V_\phi} & \gate{X^{- x_0}} & \meter\cwx[1] & \qw & \qw \\
 & &  & & \control & \cw & \cw
}
\end{align}
Setting aside the $Z^{z_1}$ gate (which will average to a dephasing channel by \cref{lem:Dephasing}) to begin, we have four Kraus operators arising from two measurement outcomes and two values of $x_0$
These Kraus operators are given by
\begin{align}
    L_{k,x_0} = \begin{cases}
    \left(\ketbra{x_0} + \cos\phi\ketbra{1+x_0}\right) \otimes \ket{k} & k = x_0 \\
    -i\sin\phi\ketbra{1+x_0}\otimes\ket{k}& k \neq x_0.
    \end{cases}
\end{align}
Averaging over $z_1\in\bb{Z}_2$ removes any off-diagonal terms introduced by conjugating by these Kraus operators, so that effectively we have 6 Kraus operators for the final channel.
We can combine like terms to reduce this down to 4 Kraus operators of the form
\begin{align}
    \tilde{K}_{k',k} = \sqrt{|p(k,k')|}\ketbra{k'}\otimes\ket{k},
\end{align}
where $k'$ indicates which ideal projector acts on the state space, $k$ is the result written to the readout register, and $p(k,k')$ is a coefficient representing the probability of observing $k$ and applying $k'$.
After this simplification, the final set of Kraus operators can be written
\begin{align}
    \tilde{K}_{0,0} &= (1-i\sin\phi)\ketbra{0}\otimes\ket{0}\notag\\
    \tilde{K}_{0,1} &= \cos\phi\ketbra{1}\otimes\ket{0}\notag\\
    \tilde{K}_{1,0} &= \cos\phi\ketbra{0}\otimes\ket{0}\notag\\
    \tilde{K}_{1,1} &= (1-i\sin\phi)\ketbra{0}\otimes\ket{0}.
\end{align}
This noise can thus be condensed into a confusion matrix representation
\begin{align}
    \frac{1}{2}\begin{pmatrix}
        1+\sin^2\phi & \cos^2\phi\\
        \cos^2\phi & 1+\sin^2\phi
    \end{pmatrix}.
\end{align}
Adding additional errors to the CZ gate may change the confusion matrix and/or add a stochastic Pauli error channel acting on the state space.
In \cref{sec:appendix-noisy-ind-meas}, we show how this example generalizes to higher dimensions and also indirect measurements of multi-qudit operators.

\section{Conclusion}

Measurements are a fundamental part of quantum computations, and are a major source of intractable noise in modern quantum systems.
In this paper, we showed that the common method for modeling noisy measurements is inadequate to capture the dynamics in a realistic system.
We proposed a compilation method which addresses the intractability of measurement errors, rendering them easier to mitigate and model, and showed that applying our technique reduces measurement errors to a form which is commonly used to model measurement errors, that is, to probabilistic misreporting of measurement outcomes and a stochastic channel on the unmeasured qudits.
Our method applies to non-destructive and indirect measurements, making it fully generalizable to any case where measurements could be used in a quantum computation, and does not introduce significant overhead in either the number or gates or number of shots required.

\section{Acknowledgements}

We would like to thank Samuele Ferracin, Matthew Graydon, and Emily Wright for helpful discussions. 
This research was supported by the U.S. Army Research Office through grant W911NF-21-1-0007, the Canada First Research Excellence Fund, the Government of Ontario, and the Government of Canada through NSERC.

\bibliographystyle{apsrev4-2}
\bibliography{library}
\appendix

\section{Indirect measurements with noisy Clifford operations}\label{sec:appendix-noisy-ind-meas}

Noise can arise in any or all of the component operations used to implement an indirect measurement.
In this subsection, we look specifically at the effect of an over- or under-rotation of the coupling operation used to implement an indirect Weyl measurement.

An explicit form for a circuit which performs an indirect measurement of a Weyl operator $A\in\bb{PW}_{d, n}$ is given in \cref{fig:WeylMeas}, where $F$ is the quantum Fourier transform,
\begin{align}
    F &= \frac{1}{\sqrt{d}}\sum_{a,b\in\bb{Z}_d}\chi_a(b)\ketbra{a}{b},
    \label{eq:Fourier}
\end{align}
and a controlled-$A$ gate is given by
\begin{align}
    \hat{A}= \sum_{j\in\bb{Z}_d} \ketbra{j} \otimes A^j.
    \label{eq:ControlledWeyl}
\end{align}

\begin{figure}[t!]
\begin{align*}
\Qcircuit @C=1em @R=.7em {
\lstick{\mbox{State register}} & \qw & \qw &  \gate{A} & \qw & \qw & \qw \\
\lstick{\mbox{Readout register: } \bra{0}} & \qw & \gate{F} & \ctrl{-1} & \gate{F^\dagger} & \meter & \cw \\
}
\end{align*}
\caption{Circuit which performs an indirect measurement of a Weyl operation $A\in\bb{W}(\bb{H}_s)$, where the quantum Fourier transform is defined in \cref{eq:Fourier} and controlled-Weyl gates are defined in \cref{eq:ControlledWeyl}. The measurement is performed in the computational basis.}
\label{fig:WeylMeas}
\end{figure}
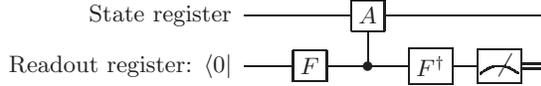

We assume that the Fourier and inverse Fourier operations are performed perfectly and implement the controlled-$A$ operation as an imperfect rotation.
Recall that the ideal controlled-$A$ operation is
\begin{align}
    \hat{A} = \sum_{j\in\bb{Z}_d} \ketbra{j}\otimes A^{j}.
\end{align}
We now define an operation which can represent an over- or under-rotation of the applied operation,
\begin{align}
    \hat{A}(t) \equiv \hat{A}^t = \sum_{j\in\bb{Z}_d} \ketbra{j}\otimes A^{jt},
\end{align}
where the ideal operation is applied when $t=1$, that is $\hat{A}(1)=\hat{A}$.
We will assume that $t$ is close to $1$ and let $t=1+\epsilon$. Then
\begin{align}
    A^{jt} = A^{j(1+\epsilon)}=A^j(A^j)^\epsilon.
    \label{eq:Ajt_factorization}
\end{align}
For small $\epsilon$, we can expand $B^\epsilon$ as
\begin{align}
    B^\epsilon = I+\epsilon \rm{ln}(B)+\mc{O}(\epsilon^2).
\end{align}
Applying this expansion to \cref{eq:Ajt_factorization},
\begin{align}
    A^{jt} &= A^j[I+\epsilon \rm{ln}(A^j)+\mc{O}(\epsilon^2)]\\
    &\approx  A^j+\epsilon jA^j\rm{ln}(A).
    \label{eq:Ajt_expansion}
\end{align}

To get the complete expression for the Kraus operators of an indirect measurement with noise of the described form, we fold the Fourier transform into the input state to get an initial state

\begin{align}
    F\ket{0} = \frac{1}{\sqrt{d}}\sum_{a\in\bb{Z}_d}\ket{a} \equiv \ket{+},
    \label{eq:prepPlus}
\end{align}
and fold the inverse Fourier transform into the computational basis measurement, to get an effective destructive measurement on the readout space with Kraus operators
\begin{align}
    \mc{M}_{I,\bb{H}_r}F^\dagger &= \{\bra{k}F^\dagger: k\in\bb{Z}_d\}\\
    &= \{\frac{1}{\sqrt{d}} \sum_{b\in\bb{Z}_d} \bar{\chi}_b(k)\ketbra{k}{b}: k\in\bb{Z}_d\}
    \label{eq:fourierMeas}
\end{align}
acting on the state register.
Combining the input state (\cref{eq:prepPlus}), interaction operation, $\hat{A}^t$, and effective measurement (\cref{eq:fourierMeas}), we get Kraus operators for the noisy measurement of the form
\begin{align}
    M_k(t) = \frac{1}{d}\sum_{j\in\bb{Z}_d}\bar{\chi}_k(j)A^{jt}.
    \label{eq:noisyKrausMeas}
\end{align}
Note that for the remainder of this section we omit the readout register, treating this as a destructive measurement, for ease of book-keeping.
However, the following analysis generalizes trivially to a non-destructive measurement.
Substituting \cref{eq:Ajt_expansion} into the Kraus operators in \cref{eq:noisyKrausMeas}, we get
\begin{align}
    M_k(1+\epsilon) &\approx \frac{1}{d}\sum_{j\in\bb{Z}_d}\bar{\chi}_k(j)[A^j+\epsilon jA^j\rm{ln}(A)] \notag\\
    &\approx \pi_k(A) + \frac{\epsilon\rm{ln}(A)}{d}\sum_{j\in\bb{Z}_d}\bar{\chi}_k(j)jA^j,
\end{align}
where $\pi_k(A) = M_k(1)$ are the projectors on the state space for the ideal measurement of $A$.
Ideally, a measurement of $A$ with outcome $k$ projects fully onto the $k^{th}$ eigenspace of $A$ so that for a set of eigenvectors $\{\ket{\psi_b}: b\in\bb{Z}\}$ of $A$, $\bra{\psi_b}\pi_k(A)\ket{\psi_b}=\delta_{b,k}\braket{\psi_b}$.
If the second term is not proportional to $\pi_k(A)$, this measurement can leave some of the state outside of the $k^{th}$ eigenspace of $A$.
To see this, let
\begin{align}
    J_k = \sum_{j\in\bb{Z}_d}\bar{\chi}_k(j)jA^j\rm{ln}(A),
\end{align}
so that we can describe the second term more succinctly, and observe that the terms resulting from applying $M_k(t)$ to a state $\bar{\rho}$ will be proportional to:
\begin{enumerate}
    \item $\pi_k(A)\bar{\rho}\pi_k(A)^\dagger$ (which is in the expected eigenspace)
    \item cross terms $J_k\bar{\rho}\pi_k(A)^\dagger$ and $\pi_k(A)\bar{\rho}J_k^\dagger$, and
    \item $J_k\bar{\rho}J_k^\dagger$.
\end{enumerate}
If $J_k$ is proportional to a projector (or a combination of projectors) other than $\pi_k(A)$, the cross terms allow superpositions between the cospaces to persist beyond the measurement, and $J_k\bar{\rho}J_k^\dagger$ allows terms outside of the $k^{th}$ eigenspace to persist as well.

We now show that $J_k$ does not implement a complete projection onto the $k^{th}$ eigenspace, i.e. $J_k$ is not proportional to $\pi_k(A)$.
We can define an eigenbasis of $A$ with eigenvalues $\{\chi_b(1): b\in\bb{Z}_d\}$ by selecting a $+1$ eigenvector $\ket{\psi_0}$ of $A$ for $b=0$ and letting the remaining eigenvectors be $\ket{\psi_b}= T_b\ket{\psi_0}$, where $AT_b=\chi_b(1)T_bA$.
To show that $J_k$ does not project onto the $k^{th}$ eigenspace, we need to demonstrate that $J_k\ket{\psi_b}$ is non-zero for some $b\neq k$.
If $\ket{\psi_b}$ is an eigenvector of $A$ with eigenvalue $\chi_b(1)$, then $\ket{\psi_b}$ is also an eigenvector of $\rm{ln}(A)$ with eigenvalue $\rm{ln}(\chi_b(1))$.
So, recalling \cref{eq:CyclicCharacters}, we have
\begin{align}
    J_k\ket{\psi_b} &= \sum_{j\in\bb{Z}_d} \bar{\chi}_k(j)j\rm{ln}[\chi_{b}(1)]\chi_b(1)^j\ket{\psi_b}\\
    &= \frac{2\pi i b}{d}\sum_{j\in\bb{Z}_d} \bar{\chi}_k(j)\chi_b(j)j\ket{\psi_b}.
\end{align}
Using the series identity
\begin{align}
    \sum_{i=0}^n k x^k = x\frac{1 - (n+1) x^n + n x^{n+1}}{(1 - x)^2}
\end{align}
and \cref{eq:CyclicCharacters}, it can be shown that
\begin{align}
    \frac{2\pi i}{d}\sum_{j\in\bb{Z}_d} \chi_1(j)j = \pi \left[\cot\left(\frac{\pi}{d}\right)-i\right].
\end{align}
Therefore for $k=0$ and $b=1$, we have
\begin{align}
    J_0\ket{\psi_1} &= \pi \left[\cot\left(\frac{\pi}{d}\right)-i\right]\ket{\psi_1},
\end{align}
which is always nonzero.
As $0$ and $1$ are in $\bb{Z}_d$ for any $d>2$, there will always be an outcome which can leave a portion of the state outside of the expected eigenstate after measurement for any meaningful qudit system.
We can therefore conclude that an over- or under- rotation of the coupling gate for an indirect Weyl measurement can leave a portion of the state outside of the eigenspace associated with the measured outcome.
Note that the asymmetry between, e.g., outcomes $k=0$ and $k=1$ arises as a result of the asymmetry in the error model; if the state of the system prior to the measurement is a $+1$ eigenvector of $A$, then $\hat{A}(t)$ leaves it unchanged for any value of $t$.

\end{document}